\begin{document}

\title{Weighted Sampling Without Replacement from Data Streams}

\author{Vladimir Braverman\thanks{Johns Hopkins University, Department of Computer Science. Email: vova@cs.jhu.edu. This material is based upon work supported in part by the National Science Foundation under Grant No. 1447639, the Google Faculty Award and DARPA grant N660001-1-2-4014. Its contents are solely the responsibility of the authors and do not represent the official view of DARPA or the Department of Defense.} Rafail Ostrovsky\thanks{University of California Los Angeles, Department of Computer Science and Department
of Mathematics, Email: rafail@cs.ucla.edu. } Gregory Vorsanger\thanks{Johns Hopkins University, Department of Computer Science. gregvorsanger@jhu.edu. This material is based upon work supported in part by Raytheon BBN Technologies.}}

\institute{}

\maketitle

\begin{abstract}

Weighted sampling without replacement has proved to be a very important tool in designing new algorithms.
Efraimidis and Spirakis (IPL 2006) presented an algorithm for weighted sampling
without replacement from data streams.
Their algorithm works under the assumption of precise computations over the interval $[0,1]$.
Cohen and Kaplan (VLDB 2008) used similar methods for their bottom-k
sketches.

Efraimidis and Spirakis ask as an open question whether using finite precision arithmetic impacts the accuracy of their algorithm. In this paper we show a method to avoid this problem by providing a precise reduction from k-sampling without replacement to k-sampling with replacement. We call the resulting method Cascade Sampling.
\end{abstract}

\section{Introduction}

\emph{Random sampling} is a fundamental tool that has many applications in
computer science (see e.g., Motwani and Raghavan \cite{randomized}, Knuth \cite{knuth}, Tille \cite{tille}, and Olken \cite{olken}). Random sampling methods are widely used is data stream processing
because of their simplicity and efficiency \cite{dlt,stream_operator,mg,dynamic,reservoir2,reservoir1}. In a stream, the size of the domain and the probability of sampling an element both change constantly; this makes the process of sampling non-trivial. We distinguish between
sampling \emph{with replacement}, where all samples are independent
(and thus can be repeated), and sampling
\emph{without replacement}, where repetitions are prohibited.

In particular, weighted sampling without replacement has proven to be a very important tool. In weighted sampling,  each element is given a weight, where the probability of an element being selected is based on its weight. 
In their work Efraimidis and Spirakis \cite{weighted_power} presented an algorithm for weighted sampling
without replacement. Cohen and Kaplan \cite{Cohen:2008:TEU:1453856.1453884} use similar methods for their bottom-k
sketches. While their preliminary implementation yielded promising results, Efraimidis and Spirakis \cite{weighted_power} state, as the main open problem of the paper, \emph{``However, the question if, and to what extent, the finite precision arithmetic affects the algorithms remains an open problem.''}

In this paper we continue this work and provide a new algorithm to avoid the issue of relying on finite precision arithmetic. With this result we show that precision loss is not required in order to sample without replacement. We accomplish this by providing a precise reduction from $k$-sampling without replacement to $k$-sampling with replacement, using a special case of $k$-sampling with replacement, unit sampling (where $k$=$1$).  Additionally, we believe that in the future our method of expressing different random samples via reduction will provide a tool that allows further translation of other sampling methods into a more effective form for streams.

\subsection{Related Work}
Due to its fundamental nature, the problem of random sampling has received considerable attention
in the last few decades.

In 2005, Vitter \cite{reservoir} presented uniform sampling using a reservoir (with
and without replacement) over streams.
Further, the question of reductions between sampling methods has been addressed before.
For instance, Chaudhuri, Motwani and Narasayya \cite{joins} briefly discuss reductions for various sampling methods.
Cohen and Kaplan \cite{Cohen:2008:TEU:1453856.1453884} use a ``mimicking process'' in their papers, which is essentially a reduction
from sampling without replacement to sampling with replacement.

Chaudhuri, Motwani and Narasayya \cite{joins} use
the well-known method of ``over-sampling'', i.e. we sample the set independently
until $k$ distinct elements are obtained. Clearly, this schema does not introduce any precision loss,
since unit sampling is used as a black-box.

Unfortunately, the amount of resources required to determine this information is a function of  the weight distribution for the data set, and thus can be arbitrarily large.

In particular, consider the case when there is an element with weight that is overwhelmingly larger than
the rest of the population. In this case, the number of repetitions found while sampling with replacement is significantly larger then $k$.

Probably the first effective non-streaming solution for the weighted sampling without replacement problem was the algorithm
of Wong and Easton \cite{weighted_non_stream}. It is used by many other algorithms (see Olken \cite{olken} for the discussion).
For data streams, Efraimidis and Spirakis \cite{weighted_power} proposed an algorithm that is based on the ``exponent method''.
The algorithm requires precise computations of random keys $r^{1/w(p)}$, where $r\sim U[0,1]$.
The sample generated is composed of the $k$ elements with maximal keys. Cohen and Kaplan \cite{Cohen:2008:TEU:1453856.1453884} used similar methods as a building block for their bottom-k
sketches. The bottom-k sketch is an effective construction that has been extensively used for various
applications including approximations of aggregative queries over data streams. As Cohen and Kaplan \cite{Cohen:2008:TEU:1453856.1453884} show,
these methods are very effective in practical applications and are superior to the sketches that are based on sampling with replacement.

While effective in practice, the algorithms of Efraimidis and Spirakis and Cohen and Kaplan
introduce a loss of accuracy, since their techniques require additional floating point arithmetic operations.

\subsection{Results}

In this paper we show that the tradeoff between precision and performance is not a necessary property of sampling without replacement from data streams and construct a precise streaming reduction from $k$-sampling without replacement to $k$-sampling with replacement. This result provides a practical improvement to the algorithms of Efraimidis and Spirakis in cases where high accuracy is required. 

Our method is yields a surprisingly simple algorithm, given the importance of sampling without replacement and the existence of many previous methods. We call this algorithm Cascade Sampling.
In particular, when used with the algorithm from \cite{joins} Cascade Sampling requires $O(k)$ memory, constant time per element and the same precision as in \cite{joins}. 

\subsection{Intuition}

Let $\Lambda$ be \emph{any} algorithm that maintains a unit weighted sample from stream $D$.
Similarly to the over-sampling method, we maintain instances of $\Lambda$. Namely, we maintain $k$ instances $\Lambda_1,\dots, \Lambda_k$. However, we introduce the idea of \emph{stream modification}. That is, instead of applying $\Lambda$ independently and symmetrically on $D$, we apply $\Lambda_i$ on the modified stream $D_i$ that does not contain samples of $\Lambda_j$ for $j<i$.
In particular, $\Lambda_i$ may process its input elements in an order \emph{different from the order of their arrival in $D$}.
This simple but novel idea is sufficient to solve the problem.
In particular, we can claim that the input of $\Lambda_i$ is a random set that precisely matches the definition of weighted sampling without replacement. Since we use $\Lambda$ as a black box with only a constant number of auxiliary variables, specifically pointers,
the resulting schema is a precise reduction.

\section{Definitions}

An important building block of our algorithm is the concept of a unit sample, that is, the ability to sample a single element from a set. 

\begin{definition}
Let $S$ be a finite set of elements and let $w$ be a non negative function $w: S \rightarrow R$.
A random element $X_S$ with values from $S$ is a \emph{\textbf{unit weighted random sample}} if, for any $a\in S$,
$P(X_S = a) = {w(a) \over w(S)}$. Here $w(S) = \sum_{a\in S} w(a)$.
\end{definition}

For an algorithm instantiating weighted unit sampling we provide Black-Box WR2 from \cite{joins}. Black-Box WR2 is a unit sample when $r=1$. 

\begin{algorithm}[H]\label{j}
\caption{Black-Box WR2: Algorithm for Weighted Unit Sampling}

\begin{enumerate}
\item $W \gets 0$.
\item Initialize reservoir with length $r=1$, $\lambda_0$.

\item For each tuple $t$ in stream:
\begin{enumerate}
\item Get next tuple $t$ with weight $w(t)$
\item $W \gets W + w(t)$
\item Set $\lambda_0 = t$ with prob. ${w(t) \over W}$
\end{enumerate}

\item Return $\lambda_0$
\end{enumerate}

\end{algorithm}

\begin{definition}
A \emph{\textbf{data stream}}  is an ordered, set of
elements, $p_1, p_2, \dots, p_n$, that can be observed only
once. An algorithm $A$ is a streaming sampling algorithm if $A$ outputs a sample using a single pass over the data set.
\end{definition}
\begin{definition}
A set $X = \{X_1,\dots, X_k\}$ is
called a \emph{\textbf{k-sample with replacement}} from $S$ if $X_1,\dots, X_k$ are independent random unit samples from $S$.
\end{definition}

Another fundamental sampling method is \emph{weighted sampling without replacement}.
\begin{definition}
Let $S$ be a finite set such that $|S|\ge k$.
An ordered set $X = \{X_1,\dots, X_k\}$ is called a \emph{\textbf{$k$-sample without replacement}} from $S, |S| \ge k$
if $X_1$ is a weighted unit sample from $S$ and for any $j>1$, $X_j$ is a weighted unit sample from $S\setminus \{X_1,\dots,X_{j-1}\}$.
\end{definition}
\begin{definition}
We say that there exists an a \emph{reduction} from a $k$-sampling to a unit sampling if for
any unit sampling algorithm $\Lambda$
there exists a $k$-sampling algorithm $\Upsilon = \Upsilon(\Lambda)$ that uses $\Lambda$ as a black-box.
We say that the reduction is \emph{precise} if for any $\Lambda$ that requires memory $m$ and time $t$:
\begin{enumerate}
\item $\Upsilon(\Lambda)$ requires $O(km)$ memory and $O(kt)$ time.
\item $\Upsilon(\Lambda)$ only uses comparisons (in addition to using $A$ as a black box).
\end{enumerate}
In other words, $\Upsilon(\Lambda)$ does not introduce any precision loss.
\end{definition}
There exists a (trivial) precise reduction from weighted sampling with replacement to unit sampling.
In this paper we give the first precise streaming reduction for weighted sampling without replacement to unit sampling.

\section{Cascade Sampling}
Let $S$ be a finite set such that $|S|\ge k$ and let $a\notin S$. Denote $T= S\cup \{a\}$, and let $w: T\mapsto R^+$ be a function.
Let $\{X_1,\dots, X_k\}$ be a $k$-sample without replacement from $S$ with respect to $w$.
Define an ordered sequence $\{Y_1,\dots, Y_k\}$\footnote{Here the additional randomness is independent.} as follows:
\begin{equation}\label{sddsfsdfsdfdsdsd}
Y_{1} =
\left\{
  \begin{array}{ll}
    a, & \hbox{w.p. \ \ $w(a) \over w(T)$;} \\
    X_1, & \hbox{otherwise.} \\
  \end{array}
\right.
\end{equation}
For $i\ge 1$ define\footnote{Here $\setminus$ denotes the set difference, i.e. $A\setminus B = \{x: x\in A, x\notin B\}$.}:
\begin{equation}\label{sddsfsdf}
L_i = \{X_1,\dots, X_i, a\} \setminus \{Y_1,\dots, Y_i\}.
\end{equation}
We will show that $|L_i| = 1$; assuming that, let $Z_i$ be the single element from $L_i$, i.e., $L_i = \{Z_i\}$.
Put $U_i = T\setminus \{Y_1,\dots, Y_i\}.$ Define
\begin{equation}\label{sddsfsdfsdfdsdsd}
Y_{i+1} =
\left\{
  \begin{array}{ll}
    Z_{i}, & \hbox{w.p. \ \ $w(Z_{i}) \over w(U_i)$;} \\
    X_{i+1}, & \hbox{otherwise.} \\
  \end{array}
\right.
\end{equation}
\begin{lemma}\label{sddsfsdfsdf}
For all $i=1,\dots, k$ the ordered set $\{Y_1,\dots, Y_i\}$ is an  $i$-sample without replacement from $T$ with respect to $w$.
\end{lemma}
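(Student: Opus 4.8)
The plan is to argue by induction on $i$, establishing two claims in tandem: a purely combinatorial claim that makes $Z_i$ well defined (this holds for \emph{every} realization of the $X$'s and of the auxiliary coins), and a distributional claim identifying the conditional law of each new sample.

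First I would dispose of the combinatorial claim $|L_i|=1$. By a short deterministic induction one checks that $Y_j\in\{X_1,\dots,X_j,a\}$ for every $j$ (immediate from the definition of $Y_{i+1}$, since $Z_i\in L_i\subseteq\{X_1,\dots,X_i,a\}$), and that the $Y_j$ are pairwise distinct. Distinctness follows because $Y_{i+1}\in\{Z_i,X_{i+1}\}$, where $Z_i\notin\{Y_1,\dots,Y_i\}$ by the definition of $L_i$, and $X_{i+1}\notin\{X_1,\dots,X_i,a\}\supseteq\{Y_1,\dots,Y_i\}$ because $\{X_1,\dots,X_k\}$ is sampled without replacement from $S$ and $a\notin S$. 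Thus $\{Y_1,\dots,Y_i\}$ consists of $i$ distinct elements of the $(i+1)$-element set $\{X_1,\dots,X_i,a\}$, so $L_i$ is a singleton. The same bookkeeping yields the decomposition $U_i=\{Z_i\}\sqcup\bigl(S\setminus\{X_1,\dots,X_i\}\bigr)$, and in particular the weight identity $w(U_i)=w(Z_i)+w\bigl(S\setminus\{X_1,\dots,X_i\}\bigr)$, which is what makes the two branches defining $Y_{i+1}$ fit together.

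The base case of the distributional induction is a one-line computation: for $b\in S$ we have $P(Y_1=b)=\bigl(w(S)/w(T)\bigr)\cdot\bigl(w(b)/w(S)\bigr)=w(b)/w(T)$, matching $P(Y_1=a)=w(a)/w(T)$, so $Y_1$ is a weighted unit sample from $T$. For the inductive step I would condition on the entire history, writing $\mathcal H$ for the event $\{X_1=x_1,\dots,X_i=x_i,\,Y_1=y_1,\dots,Y_i=y_i\}$ (of positive probability). Under $\mathcal H$ both $Z_i$ and $U_i$ are fixed, and because the auxiliary coins are independent of the $X$-sequence, $X_{i+1}$ is still a weighted unit sample from $S\setminus\{x_1,\dots,x_i\}$ given $\mathcal H$. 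Using the decomposition above, a direct calculation gives $P(Y_{i+1}=Z_i\mid\mathcal H)=w(Z_i)/w(U_i)$ and, for each $c\in S\setminus\{x_1,\dots,x_i\}$, $P(Y_{i+1}=c\mid\mathcal H)=\bigl(w(S\setminus\{x_1,\dots,x_i\})/w(U_i)\bigr)\cdot\bigl(w(c)/w(S\setminus\{x_1,\dots,x_i\})\bigr)=w(c)/w(U_i)$; hence $P(Y_{i+1}=b\mid\mathcal H)=w(b)/w(U_i)$ for every $b\in U_i$.

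The final and most delicate point is that this conditional probability depends on $\mathcal H$ only through $w(U_i)$, that is, only through $\{y_1,\dots,y_i\}$, even though $Z_i$ and $S\setminus\{x_1,\dots,x_i\}$ individually depend on the $x$'s. Averaging over all realizations of $X_1,\dots,X_i$ consistent with a fixed $\{y_1,\dots,y_i\}$ via the law of total probability therefore gives $P(Y_{i+1}=b\mid Y_1=y_1,\dots,Y_i=y_i)=w(b)/w(U_i)$, which is precisely the assertion that $Y_{i+1}$ is a weighted unit sample from $U_i=T\setminus\{Y_1,\dots,Y_i\}$. Together with the inductive hypothesis that $\{Y_1,\dots,Y_i\}$ is an $i$-sample without replacement from $T$, this closes the induction. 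I expect the main obstacle to be exactly this marginalization step: one must verify that the clean per-branch probabilities recombine into a quantity independent of the $x$'s, which is where the weight identity $w(U_i)=w(Z_i)+w(S\setminus\{X_1,\dots,X_i\})$ does the essential work.
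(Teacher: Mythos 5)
Your proof is correct and takes essentially the same route as the paper's: induction on $i$, conditioning on a fixed realization of $\{X_1,\dots,X_i\}$ and $\{Y_1,\dots,Y_i\}$ (so that $Z_i$ is fixed), and the key decomposition $U_i=\{Z_i\}\cup\bigl(S\setminus\{X_1,\dots,X_i\}\bigr)$, which is exactly the paper's identity $V_i=H_i$ used in its per-branch probability calculation. The two points you spell out that the paper leaves implicit --- the deterministic argument that $|L_i|=1$ and the final marginalization over $x$-realizations consistent with a fixed $\{y_1,\dots,y_i\}$ --- are a tightening of the same argument, not a different one.
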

\begin{proof}
We prove the lemma by induction on $i$. For $i=1$ the statement follows from direct computation and definitions.
Assuming that the lemma is correct for $i$ we need to prove that
\begin{equation}\label{sdffdfd}
Y_{i+1} \in T\setminus \{Y_1,\dots, Y_i\},
\end{equation}
and for any $b\in U_i$:
\begin{equation}\label{ddsdfsdfsdf}
P(Y_{i+1} = b) = {w(b)\over w(U_i)}.
\end{equation}
To show $(\ref{sdffdfd})$ observe that $\{Y_1,\dots, Y_i\} \subseteq \{X_1, \dots, X_i, a\}$ and $Y_{i+1} \in \{X_{i+1}, Z_{i}\}$.
By definition $X_{i+1} \notin \{X_1, \dots, X_i, a\}$  and $Z_{i} \notin \{Y_1,\dots, Y_i\}$.

To show $(\ref{ddsdfsdfsdf})$ fix $\{X_1,\dots, X_i\}$ and $\{Y_1,\dots, Y_i\}$; it follows that $Z_i$ is fixed as well.
Denote $V_i = U_i \setminus \{Z_{i-1}\}$ and $H_i = S\setminus \{X_1,\dots, X_i\}$; it follows that $H_i = V_i$.
For any fixed $b\in V_i$ we have
$$
P(Y_{i+1} = b) = P(X_{i+1} = b){w(V_i) \over w(U_i)} = {w(b) \over w(H_i)}{w(V_i) \over w(U_i)} = {w(b) \over w(U_i)}.
$$
The case $b= Z_{i-1}$ is similar.
\end{proof}

\section{Precise Reduction and Resulting Algorithm}\label{sddssdsdfsdf}

Let $\Lambda$ be an algorithm that maintains a unit weighted sample from $D$. The algorithm from \cite{joins} is an example of $\Lambda$ but our reduction works with \emph{any} algorithm for unit weighted sampling.
We construct an algorithm $\Upsilon = \Upsilon(\Lambda)$ such that $\Upsilon$ maintains a $k$-sample without replacement.
Specifically, we maintain $k$ instances of $\Lambda$: $\Lambda_1, \dots, \Lambda_k$ such that the input of $\Lambda_i$ is a random substream of $D$ that is selected in a special way. We denote the input stream for $\Lambda_i$ as $D_i$.
Let $X_i$ be the sample produced by $\Lambda_i$.
The critical observation is that our algorithm maintains the following invariant: at any moment $D_i = D \setminus \{X_1, \dots, X_{i-1}\}$. Thus, by definition, the weighted sample from $D_i$ is the $i$-th weighted sample from $D$ when the samples are without replacement.

\begin{theorem}\label{lm:strem weighted}
Algorithm $\Upsilon = \Upsilon(\Lambda)$ maintains a weighted $k$-sample without replacement from $D$.
If $\Lambda$ requires space $O(g)$ and time per element $O(h)$, then $\Upsilon$ requires $O(kg)$ space and $O(kh)$ time respectfully.
Thus, there exists a precise reduction from $k$-sampling without replacement to a unit sampling.
\end{theorem}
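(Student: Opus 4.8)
The plan is to establish correctness by induction on the length $n$ of the stream prefix read so far, and then to tally space, time, and precision separately. Writing $S_n=\{p_1,\dots,p_n\}$, the invariant I would carry is that after $\Upsilon$ has consumed $p_1,\dots,p_n$ (with $n\ge k$) the samples $X_1,\dots,X_k$ held by $\Lambda_1,\dots,\Lambda_k$ form a $k$-sample without replacement from $S_n$, and that each $\Lambda_i$ has in effect been fed exactly the stream $D_i=S_n\setminus\{X_1,\dots,X_{i-1}\}$. The fill-up phase $n<k$ and the base case $n=k$ are a direct verification (the first $k$ distinct elements are placed so as to realize a weighted permutation), so all the content is in the inductive step, which I would drive entirely by Lemma~\ref{sddsfsdfsdf}.

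For the step $n\to n+1$ I would set $a=p_{n+1}$ and $T=S_{n+1}=S_n\cup\{a\}$ (the elements being distinct, $a\notin S_n$, so the hypothesis $|S_n|\ge k$ of the lemma is met). The algorithm hands $a$ to $\Lambda_1$; because $\Lambda_1$ maintains a unit sample of $D_1\cup\{a\}=T$, it overwrites its sample with $a$ with probability $w(a)/w(T)$ and keeps $X_1$ otherwise, which is exactly the rule defining $Y_1$. The element of $\{X_1,a\}$ that is \emph{not} retained is the unique member $Z_1$ of $L_1$, and $\Upsilon$ forwards it to $\Lambda_2$. I would then run a secondary induction on the cascade index $i$ to show that forwarding $Z_i$ to $\Lambda_{i+1}$ reproduces the recursion defining $Y_{i+1}$: the instance adopts $Z_i$ with probability $w(Z_i)/w(U_i)$ and otherwise keeps $X_{i+1}$, and the element it sheds is precisely $Z_{i+1}$, which is passed on. Once the cascade identifies $(X_1,\dots,X_k)$ with the process $(Y_1,\dots,Y_k)$, Lemma~\ref{sddsfsdfsdf} immediately gives that the new samples are a $k$-sample without replacement from $T=S_{n+1}$, closing the outer induction.

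The step I expect to be the crux is checking that $\Lambda_{i+1}$ really accepts $Z_i$ with probability $w(Z_i)/w(U_i)$, i.e.\ that the total weight of its stream after receiving $Z_i$ telescopes to $w(U_i)=w(T)-\sum_{j\le i}w(Y_j)$. For this I would use the bookkeeping identity $\{Y_{i+1},Z_{i+1}\}=\{X_{i+1},Z_i\}$, which follows from the definition~(\ref{sddsfsdf}) of $L_{i+1}$ together with $|L_{i+1}|=1$, and hence $w(Y_{i+1})+w(Z_{i+1})=w(X_{i+1})+w(Z_i)$. Starting from the inductive assumption that $\Lambda_{i+1}$ had been fed $S_n\setminus\{X_1,\dots,X_i\}$ and adding $Z_i$, this weight-conservation makes the running total collapse to exactly $w(U_i)$, so the black box is invoked with precisely the probability the construction demands; the same computation re-establishes the stream invariant $D_{i+1}=T\setminus\{Y_1,\dots,Y_i\}$ for the next arrival.

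It then remains to read off the resource bounds. A single arrival is forwarded at most once to each of $\Lambda_1,\dots,\Lambda_k$, so the per-element time is $O(kh)$, and maintaining $k$ black-box instances together with a constant number of pointers tracking the floating element costs $O(kg)$ space. For preciseness I would observe that, outside the black boxes, $\Upsilon$ performs only equality comparisons — testing whether $\Lambda_{i+1}$'s updated sample equals its input $Z_i$ in order to name the shed element $Z_{i+1}$ — and pointer reassignments; it never touches a probability or performs any real-valued arithmetic. Hence $\Upsilon$ satisfies both clauses of the definition of a precise reduction, and the theorem follows.
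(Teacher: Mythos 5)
Your proposal is correct and takes essentially the same route the paper intends: the paper's own proof is a one-line deferral to Algorithm 2 and Lemma~\ref{sddsfsdfsdf}, and your outer induction on the stream prefix, the identification of the cascade's displaced elements with the $Z_i$ of the lemma, and the weight-telescoping check via $\{Y_{i+1},Z_{i+1}\}=\{X_{i+1},Z_i\}$ (giving running total $w(U_i)$ for $\Lambda_{i+1}$) are exactly the details that deferral leaves implicit. Your resource and precision accounting likewise matches the paper's claims, so your write-up is, if anything, more complete than the published proof.
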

\begin{proof}

Follows from the description of the algorithm (See Algorithm 2) and Lemma \ref{sddsfsdfsdf}.
\end{proof}

\begin{algorithm}[H]
\caption{Cascade Sampling}

\emph{Input: Data Stream $D = \{p_1,\dots, p_n\}$},\\
\ \ \ \ \emph{$\Lambda$ is an algorithm that maintains a unit weighted sample from $D$},\\
\ \ \ \ \emph{$\Lambda_1,\dots, \Lambda_k$ are independent instances of $\Lambda$}\\
\emph{Output: Weighted $k$-Sample Without Replacement $\{Y_1,\dots, Y_k\}$ }
\begin{enumerate}
  \item For $j= 1,2,\dots, n$
  \begin{enumerate}
  \item $new = p_j$
  \item For $i = 1, \dots, \min\{j,k\}$
  \begin{enumerate}
  \item If ($i<j$) then set $previous = Y_i$  (where $Y_i$ the current output of $\Lambda_i$).
  \item Feed $\Lambda_i$ with $new$
  \item If $Y_i$ changes its value to $new$, then set $new = previous$.
  \end{enumerate}
  \end{enumerate}
  \item Output $\{Y_1,\dots, Y_k\}$

\end{enumerate}

\end{algorithm}

Algorithm 2 provides a solution to the weighted $k$-Sampling without replacement problem. To better demonstrate the algorithm, we show an example of updating a single unit sample inside of loop \emph{(b)} in Figure 1. In this example, unit sample $\lambda_1$ has currently sampled element $a$ and unit sample $\lambda_2$ has currently sampled element $b$, where $a$ and $b$ are elements that appeared previously in the stream.

\begin{figure}[h]\label{pic1}
\centering
\includegraphics[scale=0.5]{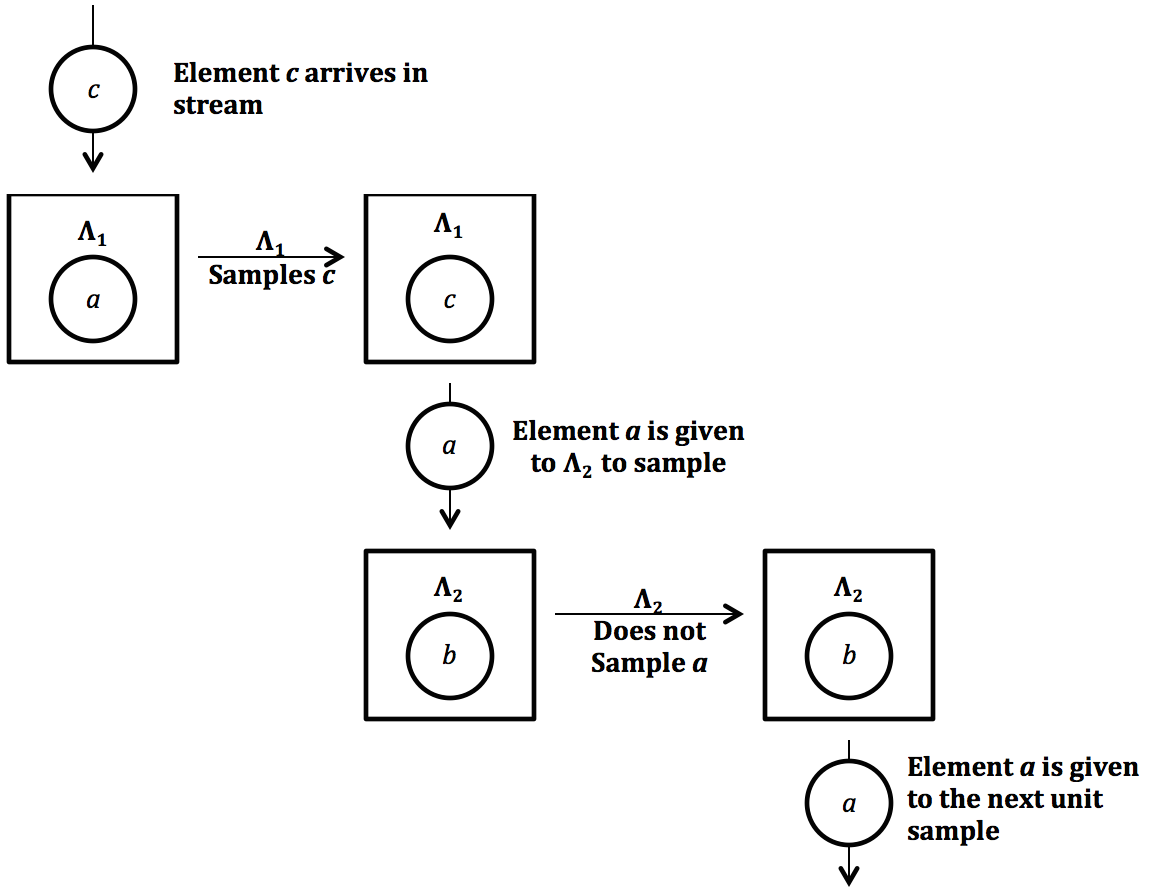}\caption{Updating a Unit Sample}
\setlength{\belowcaptionskip}{-10pt}
\clearpage
\end{figure}

\subsection{Discussion}
There are several directions in which our algorithm can be improved. In particular, run time dependent on the number of samples is one issue for practical datasets with large k. We believe this can be improved by combining several sampling steps into a single step which will be useful for the cases when the element will not be sampled into any of the substreams. This will often be the cases with elements with small weights. Specifically, we ask if it is possible to reduce the total running time from $O(nk)$ to $O(n \log k)$.

Another interesting direction is applying this algorithm to weighted random sampling with a bounded number of replacements as shown in \cite{efraimidis2010weighted}. Finally, this method may also be interesting when applied to the Sliding Window Model \cite{our} and Streams with Deletions \cite{dynamic}.

We thank our anonymous reviewers for their helpful suggestions, particularly for suggesting interesting open problems for discussion.

\newpage

\bibliography{Bibliography}

\end{document}